\newtheorem{thm}{Theorem}
\newtheorem{lem}[thm]{Lemma}
\newtheorem{prop}[thm]{Proposition}
\theoremstyle{definition}
\newtheorem{defn}[thm]{Definition}
\theoremstyle{remark}
\newcommand{\R}{\mathbb R}
\newcommand{\I}{\mathcal{I}}
\newcommand{\M}{\mathcal M}
\newcommand{\C}{\mathcal C}
\newcommand{\F}{\mathcal F}
\begin{document}
\title{Symmetric Submodular Function Minimization \\ Under Hereditary
  Family Constraints\footnote{This work was partially supported by NSF
  contract CCF-0829878 and by ONR grant N00014-05-1-0148.}}

\author{Michel X. Goemans\thanks{MIT, Dept.~of Math., Cambridge, MA
02139. \texttt{goemans@math.mit.edu}.} \and Jos\'e A. Soto\thanks{MIT,
Dept.~of Math., Cambridge, MA 02139. \texttt{jsoto@math.mit.edu}.}}
\date{~} \maketitle

\thispagestyle{empty}

\begin{abstract} We present an efficient algorithm to find non-empty
minimizers of a symmetric submodular function over any family of sets
closed under inclusion. This for example includes families defined by
a cardinality constraint, a knapsack constraint, a matroid
independence constraint, or any combination of such constraints. Our
algorithm make $O(n^3)$ oracle calls to the submodular function where
$n$ is the cardinality of the ground set. In contrast, the problem of
minimizing a general submodular function under a cardinality
constraint is known to be inapproximable within $o(\sqrt{n/\log n})$
(Svitkina and Fleischer [2008]).

The algorithm is similar to an algorithm of Nagamochi and
Ibaraki [1998] to find all nontrivial inclusionwise minimal minimizers
of a symmetric submodular function over a set of cardinality $n$ using
$O(n^3)$ oracle calls. Their procedure in turn is based on
Queyranne's algorithm [1998] to minimize a symmetric submodular
function.
\end{abstract}

\newpage
\setcounter{page}{1}
\section{Introduction}

Consider a finite set $V$, and a real set function $f:2^V \to \R$ on
$V$.  Given two different sets
$A,B\subseteq V$, we say that $A$ and $B$ are \emph{crossing} if
$A\setminus B$, $B\setminus A$, $A\cap B$ and $V\setminus (A\cup B)$
are all non-empty.
The function $f$ is \emph{submodular} (resp.~\emph{crossing
submodular}) over $V$ if
\begin{equation}\label{eqn-submodularity}
f(A \cup B) + f(A \cap B) \leq f(A) + f(B),
\end{equation}
for every pair of subsets (resp.~crossing subsets) $A$ and $B$ of
$V$. Observe that any submodular function is also crossing submodular,
by definition. 
A pair
$(V,f)$ where $f$ is (crossing) submodular  is called a
\emph{submodular system}. The function $f$ is further called
\emph{symmetric} if
\begin{equation}\label{eqn-symmetry}
f(A) = f(V\setminus A), \text{ for all $A \subseteq V$}.
\end{equation}

Submodularity is observed in a wide family of problems. The rank
function of a matroid, the cut function of a (weighted, directed or
undirected) graph, the entropy of a set of random variables, or the
logarithm of the volume of the parallelipiped formed by a set of
vectors are all examples of submodular functions. Furthermore, many
combinatorial optimization problems can be formulated as minimizing a
submodular function; this is for example the case for the problem of
finding the smallest number of edges to add to make a graph
$k$-edge-connected. Therefore, the following problem is considered a
fundamental problem in combinatorial optimization.  

\paragraph{Unconstrained Minimization Problem:} Given a submodular
system $(V,f)$, find a subset $A^* \subseteq V$ that minimizes
$f(A^*)$.\\

A submodular function $f$ is usually given by an oracle which, given a set
$S$, returns $f(S)$. Gr\"otschel, Lov\'asz and
Schrijver~\cite{grtschel_ellipsoid_1981,grtschel_geometric_1993} show that
this problem can be solved in strongly polynomial time using the
ellipsoid method; by running time we mean both the computation time
and the number of oracle calls. Later, a collection of combinatorial strongly
polynomial algorithm have been developed by several
authors~\cite{fleischer_push-relabel_2003,iwata_combinatorial_2001,iwata_fully_2002,orlin_faster_2007,schrijver_combinatorial_2000,iwata_simple_2009}. The
fastest purely combinatorial algorithms known so far, due to Iwata and
Orlin~\cite{iwata_simple_2009} and Orlin~\cite{orlin_faster_2007} make
$O(|V|^5\log(|V|))$ and $O(|V|^5)$ function oracle calls respectively.

When $f$ has more structure, faster algorithms are known. The case where $f$ is symmetric is of special interest. In
this case, we also require the minimizer $A^*$ of $f$ to be a
\emph{nontrivial} subset of $V$, that is $\emptyset \subset A^*
\subset V$, otherwise the problem becomes trivial since, by symmetry
and submodularity, $f(\emptyset)= \frac12(f(\emptyset) + f(V)) \leq
\frac12 (f(A) + f(V \setminus A)) = f(A)$, for all $A\subseteq V$.

The canonical example of a symmetric submodular function is the cut
capacity function of a nonnegatively weighted undirected
graph. Minimizing such a function corresponds to the minimum cut
problem. Nagamochi and Ibaraki~\cite{nagamochi_computing_1992,
  nagamochi_linear-time_1992} give a combinatorial algorithm to
solve this problem without relying on network flows. This algorithm
has been improved and simplified independently by Stoer and
Wagner~\cite{stoer_simple_1997} and
Frank~\cite{frank1994edge}. Queyranne~\cite{queyranne_minimizing_1998}
generalizes this and obtains a purely combinatorial algorithm
that minimizes a symmetric submodular function using only $O(|V|^3)$
function oracle calls.

In this paper, we are interested in the problem of minimizing
symmetric submodular functions over subfamilies of $2^V$ that are
closed under inclusion. More precisely, an \emph{hereditary family}
$\I$ (also called a lower ideal, or a down-monotone family) over $V$
is defined as a collection of subsets of $V$ such that if a set is in
the family, so are all its subsets. A triple $(V,f,\I)$ where $f$ is
symmetric and submodular on $V$, and $\I$ is an hereditary family on
$V$ is called an \emph{hereditary submodular system}. A natural
version of the minimization problem associated to hereditary
submodular systems is the following.

\paragraph{Hereditary Minimization Problem:} Given an hereditary
submodular system $(V,f,\I)$, find a subset $\emptyset \neq A^* \in
\I$ that minimizes $f(A)$ over all the sets $A \in \I$.\\

Common examples of hereditary families include
\begin{itemize}
  \item \textbf{Cardinality families:} For $k\geq 0$, the family of all subsets with at most $k$ elements: $\I=\{A \subseteq V: |A| \leq k\}$.
  \item \textbf{Knapsack families:} Given a weight function $w: V \to \R_+$, consider the family of all subsets of weight at most one unit: $\I=\{A \subseteq V: \sum_{v \in A} w(v) \leq 1\}$.
  \item \textbf{Matroid families:} Given a matroid $\M$ over $V$,
consider the family of independent sets of $\M$.
  \item \textbf{Hereditary graph families:} Given a graph $G=(V,E)$,
consider the family of sets $S$ of vertices such that the induced
subgraph $G[S]$ satisfies some hereditary property such as being a
clique, being triangle-free, being planar (or exclude certain minors).    
  \item \textbf{Matching families:} Given a hypergraph $H=(V,E)$, consider the family of matchings of $H$, that is sets of edges that are pairwise disjoint.
\end{itemize}
The hereditary minimization problem includes, for example, the problem
of finding a planar induced subgraph in an undirected graph minimizing
the number (or the weight) of edges in its coboundary (i.e.~with
precisely one endpoint in the set). 

Noting that the intersection of hereditary families is also hereditary
we can see that the previous minimization problem is very general. In
fact, for general submodular functions, this problem cannot be
approximated within $o(\sqrt{|V|/\log |V|})$ using a polynomial number
of queries even for the simpler case of cardinality families (see
\cite{svitkina_submodular_2008}). In this paper we focus on the
symmetric case, extending Queyranne's algorithm as follows.

\begin{thm}\label{thm-sym} Given a symmetric and crossing submodular
function $f$ on $V$, and an hereditary family $\I$ of subsets of $V$,
an optimal solution for the associated hereditary minimization problem
can be found using $O(|V|^3)$ function value oracle calls.  \end{thm}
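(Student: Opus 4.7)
My plan is to adapt Queyranne's algorithm for minimizing symmetric submodular functions. Queyranne's algorithm maintains a partition $\mathcal{P}$ of $V$ into blocks and works with the contracted function $\tilde f(\mathcal{S}) := f\!\left(\bigcup_{B \in \mathcal{S}} B\right)$, which is again symmetric and crossing submodular. In each of $|V|-1$ rounds, a Maximum Adjacency (MA) ordering of $\mathcal{P}$ produces a \emph{pendant pair} of blocks $(s,t)$, meaning $\tilde f(\{t\}) \le \tilde f(\mathcal{S})$ for every $\mathcal{S} \subseteq \mathcal{P}$ with $t \in \mathcal{S}$ and $s \notin \mathcal{S}$; by symmetry of $\tilde f$ this extends to every $\mathcal{S}$ separating $s$ from $t$ in either direction. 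The algorithm then records $f(t) = \tilde f(\{t\})$ as a candidate value, merges $s$ and $t$ into a single block, and proceeds; each round costs $O(|\mathcal{P}|^2)$ oracle calls, for a total of $O(|V|^3)$.

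The first modification for the hereditary setting is to \emph{filter candidates}: record $f(t)$ only when the block $t$ itself lies in $\I$. Heredity yields a useful structural fact: if an optimum $A^* \in \I$ is expressed as a union of current blocks, every constituent block lies in $\I$, so non-$\I$ blocks of $\mathcal{P}$ can never appear in any feasible solution.

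Correctness hinges on the invariant that some optimum $A^* \in \I$ remains a union of current blocks after each contraction. In a round with pendant pair $(s,t)$, there are three cases. If $A^*$ does not separate $s$ from $t$ in $\mathcal{P}$, the subsequent contraction preserves the invariant. If $t \subseteq A^*$, then $t \in \I$ by heredity and $f(t) \le f(A^*)$ by the pendant inequality, so $t$ is itself optimal and is recorded. The delicate case, which I expect to be the main obstacle, is $s \subseteq A^*$ and $t \cap A^* = \emptyset$: the symmetric pendant inequality still gives $f(t) \le f(A^*)$, but $t$ need not lie in $\I$ and so cannot be recorded, while contracting $s$ with $t$ breaks the invariant by producing a merged block only partially contained in $A^*$.

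To overcome this obstacle I would ensure at every round that the pendant block $t$ lies in $\I$. Since feasibility of $A^*$ forces at least one current block to lie in $\I$, this is always achievable in principle: I would adapt the MA-ordering procedure to terminate at an $\I$-block---by choosing the starting block appropriately, using $\I$-aware tie-breaking, or restricting the final MA step to $\I$-blocks---while retaining the pendant-pair property for the symmetric contracted function. With $t \in \I$ guaranteed, the delicate case also yields a recorded candidate of value at most $f(A^*)$, so an optimum is always captured before the contraction could destroy the invariant. The number of rounds remains $|V|-1$ and each round still uses $O(|V|^2)$ oracle calls to $f$, giving the claimed $O(|V|^3)$ bound.
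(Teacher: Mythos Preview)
Your overall architecture is right, and you have correctly isolated the one real obstacle: guaranteeing that the last block $t$ of the pendant pair lies in $\I$. But the fixes you list do not actually achieve this. ``$\I$-aware tie-breaking'' only helps when there are ties in the legal-order key, which in general there are not. ``Restricting the final MA step to $\I$-blocks'' destroys the legal-order condition (\ref{minlex}) and with it the pendant-pair guarantee. ``Choosing the starting block appropriately'' is the right direction, but it only lets you exclude \emph{one} specified block from being last; if the current partition contains two or more blocks outside $\I$, nothing prevents one of them from ending up as $t$. So as written the argument has a gap precisely at the step you flagged.

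The paper closes this gap with a simple structural move you are missing: before each round, collapse \emph{all} non-$\I$ blocks (the ``loops'') into a single block $s$, and start the legal order from $s$. Then $s$ is the unique non-$\I$ block, so the last two blocks $(t,u)$ of the legal order are automatically in $\I$, and the recorded candidate $X_u$ is feasible. This costs no extra oracle calls. There is one more wrinkle your sketch does not handle: after contracting the pendant pair $\{t,u\}$, the merged block may itself fall outside $\I$ even though $t$ and $u$ separately lay in $\I$. The paper detects this and, when $\{t,u\}\notin\I'$, contracts $\{s,t,u\}$ together into $s$, so the invariant ``$s$ is the only loop'' is restored for the next round. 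With these two additions your outline becomes exactly the paper's proof; the round count and $O(|V|^2)$ cost per round are unchanged, giving $O(|V|^3)$.
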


In this statement, an \emph{optimal solution} refers to a nonempty set
$A^* \in \I$ that attains the minimum in the hereditary minimization problem. Our algorithm in fact returns a
\emph{minimal} solution among all optimal solutions, that is one such
that no proper subset of it is also optimal. 

For the unrestricted problem, Nagamochi and
Ibaraki~\cite{nagamochi_noteminimizing_1998}
present a modification of
Queyranne's algorithm that finds all inclusionwise minimal minimizers
of a symmetric submodular function still using a cubic number of
oracle calls. Using similar ideas, we can also list all minimal
solutions of an hereditary minimization problem using only $O(|V|^3)$
oracle calls. As these minimal solutions can be shown to be disjoint,
there are at most $|V|$ of them.

\begin{thm}\label{thm-minimal} 
Given a symmetric and crossing submodular function $f$ on $V$, and an
hereditary family $\I$ of subsets of $V$, the collection of all
minimal optimal solutions for the associated hereditary minimization
problem can be found using $O(|V|^3)$ function value oracle calls.
\end{thm}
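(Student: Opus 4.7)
The plan is to extend the algorithm underlying Theorem~\ref{thm-sym} so that, in a single execution, it outputs the complete list of minimal optimal solutions, following the strategy by which Nagamochi and Ibaraki extend Queyranne's algorithm in the unconstrained case. Two ingredients are needed: a structural claim that the minimal optimal solutions are pairwise disjoint (and hence at most $|V|$ in number), and an algorithmic refinement that extracts them all within the $O(|V|^3)$ oracle budget.

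I would first prove disjointness. Suppose $A$ and $B$ are two distinct minimal optimal solutions of common value $f^*$, and assume for contradiction that $A\cap B\neq\emptyset$. Minimality prevents one from containing the other, so both $A\setminus B$ and $B\setminus A$ are nonempty; both lie in $\I$, as subsets of $A$ and $B$ respectively. If $A\cup B\neq V$, then $V\setminus A$ and $B$ form a crossing pair, and crossing submodularity together with symmetry yields
\[
f(A\setminus B)+f(B\setminus A)=f\bigl(V\setminus((V\setminus A)\cup B)\bigr)+f\bigl((V\setminus A)\cap B\bigr)\leq f(V\setminus A)+f(B)=2f^*.
\]
Since both terms on the left are at least $f^*$, each equals $f^*$, so $A\setminus B\subsetneq A$ is itself optimal, contradicting the minimality of $A$. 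The boundary case $A\cup B=V$ is handled directly by symmetry, since then $A\setminus B=V\setminus B$ and $f(A\setminus B)=f(V\setminus B)=f(B)=f^*$ gives the same contradiction.

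The algorithmic part modifies the algorithm of Theorem~\ref{thm-sym} in the style of Nagamochi--Ibaraki. That algorithm runs $O(|V|)$ pendant-pair phases, each using $O(|V|^2)$ oracle calls and producing a candidate subset of $V$ (the set of elements currently merged under a specific contraction). The claim I would prove is that every minimal optimal solution $A^*\in\I$ arises as such a candidate in the first phase in which the selected pendant pair separates $A^*$ from its complement modulo previous contractions. Consequently, recording along the way all candidates that lie in $\I$ and retaining those of value $f^*$ produces a superset of the minimal optima, from which the inclusionwise minimal survivors — pairwise disjoint by the structural claim above — can be selected without further oracle calls. The bookkeeping involved does not require any additional queries, so the total cost remains $O(|V|^3)$.

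The main technical obstacle is proving this ``capture'' claim: when a phase contracts two elements $s$ and $t$, no minimal optimal solution containing exactly one of $\{s,t\}$ is lost. This is the hereditary analogue of the classical pendant-pair argument, which in Queyranne's algorithm shows that the cheapest cut separating $s$ from $t$ is precisely the candidate $\{t\}$ produced in that phase. Transferring this argument to the hereditary setting requires using symmetric crossing submodularity to relate candidate sets to their complements while respecting membership in $\I$, and this is where the analysis is most delicate; once this structural fact is established, the $O(|V|^3)$ bound follows immediately from the cost analysis of the base algorithm and the disjointness result.
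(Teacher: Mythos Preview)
Your disjointness argument is fine and matches the paper's. The algorithmic part, however, has a real gap: the ``capture claim'' is false for the algorithm of Theorem~\ref{thm-sym}. In each phase that algorithm records only the candidate $X_u$ for the pendant pair $(t,u)$, not $X_t$. If, say, two singletons $\{a\}$ and $\{b\}$ are both minimal optimal and $(a,b)$ happens to be the first pendant pair, then $\{b\}$ is recorded but $\{a\}$ is immediately merged with $b$ and never appears on its own as a candidate thereafter. More generally, nothing in Algorithm~\ref{algo} prevents a contraction from fusing an element inside one minimal optimum with an element inside another (or with an element of the loop $s$), after which that minimum is irrecoverable from the candidate list. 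So filtering the candidate list of Algorithm~\ref{algo} by value and then taking inclusionwise minima can miss minimal optimal solutions.

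The paper (following Nagamochi--Ibaraki) avoids this with a genuinely different, two-pass scheme. A first pass runs Algorithm~\ref{algo} once to learn the optimal value $\lambda^*$. The second pass then maintains the invariant that $f'(\{v\})>\lambda^*$ for every current non-loop element $v$: all optimal singletons are stripped off and contracted into $s$ at the start, and whenever a pendant pair $(t,u)$ satisfies $\{t,u\}\in\I'$ and $f'(\{t,u\})=\lambda^*$, the set $X_t\cup X_u$ is output as a minimal optimum and the whole triple $\{s,t,u\}$ is contracted into $s$. This invariant is exactly what makes your capture claim provable: if $Y$ is any not-yet-found minimal optimum, no pendant pair $(t,u)$ in the second pass can separate $Y$ (otherwise $f'(\{u\})\le f(Y)=\lambda^*$), so every contraction keeps each $X_v$ entirely inside or entirely outside $Y$, and $Y$ is eventually output intact. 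Knowing $\lambda^*$ in advance, and contracting optima into $s$ the moment they appear, is the missing idea.
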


Finally, we also give some general conditions for other classes of
functions for which our methods can still be applied, see Section
\ref{sec:ext}. For instance, we can find all the minimal minimizers of
a function $f$ under hereditary constraints when $f$ is a
\emph{restriction} of a symmetric submodular function (also known as
\emph{submodular-posimodular functions}) or when $f(S)$ is defined as
$d(S,V\setminus S)$ for a monotone and consistent symmetric set map
$d$ in the sense of Rizzi~\cite{rizzi_note_2000}. See section
\ref{sec:ext} for definitions and precise statements. An example of
the latter setting is to find an induced subgraph $G[S]$ satisfying
certain hereditary property (e.g.~being planar or bipartite) and
minimizing the maximum (weighted) distance between any vertex in $S$
and any vertex in $V\setminus S$ (and this does not define a submodular
function).   

\paragraph{Other related work.}

Constrained submodular function minimization problems, i.e.~the
minimization of a submodular function over \emph{subfamilies} of
$2^V$, have also been studied in different contexts. Padberg and Rao
\cite{PadbergR82} show that the minimum odd cut problem obtained by
restricting the minimization over all odd sets can be solved in
polynomial time. This was generalized to submodular functions over
larger families of sets (satisfying certain axioms) by Gr\"otschel,
Lov\'asz and Schrijver \cite{grtschel_geometric_1993} and by Goemans and
Ramakrishnan~\cite{goemans_minimizing_1995}. This covers for example
the minimization over all even sets, or all sets not belonging to a
given antichain, or all sets excluding all minimizers (i.e.~to find
the second minimum).   For the particular case of
minimizing a \emph{symmetric} submodular function under cardinality
constraints the best previous result is a 2-approximation algorithm by
Shaddin Dughmi \cite{Dughmi2009}. Recently, Goel et
al~\cite{goel_submodular2009} have studied the minimization of
\emph{monotone} submodular functions constrained to sets satisfying
combinatorial structures on graphs, such as vertex covers, shortest
paths, perfect matchings and spanning trees, giving inapproximability
results and almost matching approximation algorithms for
them. Independently, Iwata and Nagano~\cite{Iwata_covering2009} study
both the vertex and the edge covering version of this problem.

The algorithm of Nagamochi and
Ibaraki~\cite{nagamochi_noteminimizing_1998} also works
with functions satisfying a less restrictive symmetry
condition. Narayanan~\cite{narayanan_noteminimization_2003}
shows that Queyranne's algorithm can be used to minimize a wider
class of submodular functions, namely functions that are contractions
or restrictions of symmetric submodular
functions. Rizzi~\cite{rizzi_note_2000} has given  further extension of this
algorithm for a different class of functions.

\section{Pendant pairs and Queyranne's algorithm}

In this section we review Queyranne's algorithm for the unconstrained
minimization problem of a system $(V,f)$, where $f$ is symmetric and
crossing submodular. An ordered pair $(t,u)$ of elements of $V$ is
called a \emph{pendant pair} for $(V,f)$ if $\{u\}$ has the minimum
$f$-value among all the subsets of $V$ containing $u$ but not $t$,
that is:
\begin{equation}
f(\{u\}) = \min\{f(U): U\subset V, t\not\in U \text{ and } u \in U\}.
\end{equation}

We can find a pendant pair by constructing an ordering $v_1,\ldots,v_n$ of the elements of $V$, with $|V|=n$, such that 
\begin{equation}
  f(W_{i-1} + v_i) - f(v_i) \leq f(W_{i-1} + v_j) - f(v_j), \text{ for all } 2\leq i \leq j \leq n, \label{minlex}
\end{equation}
where $v_1$ can be chosen arbitrarily, and $W_i$ denotes the set
$\{v_1,\ldots,v_i\}$. In the above inequality, we have used the
notation $W+v$ for $W\cup \{v\}$. An order successively satisfying
(\ref{minlex}) is called a \emph{legal
order}. Queyranne~\cite{queyranne_minimizing_1998} shows the
following result:
\begin{lem} \label{lem-pendantpair} 
For a symmetric and crossing submodular function $f$ on $V$, and an
arbitrarily chosen element $v_1 \in V$, the last two elements
$(v_{n-1},v_n)$ of a legal order of $V$ starting from $v_1$ constitute
a pendant pair. Furthermore, this legal ordering can be found by using
$O(|V|^2)$ function value oracle calls.
\end{lem}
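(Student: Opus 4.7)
The statement has two parts: the oracle-call bound for building a legal order and the structural claim that its last two elements form a pendant pair. I would handle these separately.

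For the oracle bound, the plan is a direct greedy construction. Fix $v_1 \in V$ arbitrarily. Precompute and cache $f(\{v\})$ for every $v \in V$ using $|V|$ oracle calls. At each subsequent step $i = 2, \ldots, n$, query $f(W_{i-1}+v)$ for every $v \in V\setminus W_{i-1}$ (at most $|V|$ new calls per step), and let $v_i$ be a minimizer of $f(W_{i-1}+v) - f(\{v\})$. The resulting order satisfies (\ref{minlex}), and the total number of oracle calls is $O(|V|^2)$.

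For the pendant-pair property, the plan is to prove by induction on $n = |V|$ the statement: for every legal order $v_1, \ldots, v_n$ of a symmetric crossing submodular system $(V,f)$, and every $U \subseteq V$ with $v_n \in U$ and $v_{n-1} \notin U$, we have $f(U) \geq f(\{v_n\})$. The base case $n=2$ is immediate since then $U = \{v_n\}$. For the inductive step, the tool is the family of legal-order inequalities
\begin{equation*}
f(W_{i-1}+v_i) - f(\{v_i\}) \leq f(W_{i-1}+v) - f(\{v\}) \quad \text{for all } v \in V\setminus W_{i-1},\ 2 \leq i \leq n,
\end{equation*}
combined with crossing submodularity applied to $U$ and successive prefixes $W_i$. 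Concretely, I would track how $U$ ``interleaves'' with the order: letting $k$ be the largest index with $v_k \in U\setminus\{v_n\}$ (if any), apply crossing submodularity to $U$ and $W_k$ to peel off an element and reduce to a strictly smaller instance on a contracted or truncated ground set, in which the inductive hypothesis provides the needed bound. The legal-order inequality at step $n-1$, namely $f(W_{n-2}+v_{n-1}) + f(\{v_n\}) \leq f(W_{n-2}+v_n) + f(\{v_{n-1}\})$, is the final inequality that closes the telescoping chain.

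The main obstacle is that crossing submodularity requires all four of $U\cap W_i$, $W_i\setminus U$, $U\setminus W_i$, and $V\setminus(U\cup W_i)$ to be nonempty; in particular, when $U \cup W_i = V$ the straightforward inequality fails. In each such degenerate configuration I would invoke the symmetry $f(S)=f(V\setminus S)$ to replace $U$ by $V\setminus U$, which contains $v_{n-1}$ but not $v_n$ (reversing the roles of the pendant pair) and restores a crossing configuration. Making sure the telescoping inequalities still compose correctly after these symmetry swaps, and that the induction is applied to a strictly smaller instance in every branch, is the delicate piece of bookkeeping; once it is carried out, the desired bound $f(U) \geq f(\{v_n\})$ follows directly.
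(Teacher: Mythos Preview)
The paper does not actually prove this lemma: it attributes the result to Queyranne~\cite{queyranne_minimizing_1998} for symmetric submodular functions and then cites Nagamochi and Ibaraki~\cite{nagamochi_noteminimizing_1998} for the observation that Queyranne's argument only uses symmetry and \emph{crossing} submodularity. So there is no in-paper proof to compare against; the relevant benchmark is Queyranne's original argument.

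Your oracle-count argument is exactly the standard one and is fine.

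Your outline of the pendant-pair claim is in the spirit of Queyranne's proof: he too combines the legal-order inequalities with submodularity applied to $U$ against the prefixes $W_i$, and the place where symmetry is needed is precisely to rescue the step when $U\cup W_i=V$ (this is exactly the gap between submodularity and crossing submodularity that Nagamochi--Ibaraki point out). One caution: Queyranne's published proof is not literally an induction on $|V|$ via contraction to a smaller instance; it is a direct chain of inequalities within the fixed ground set, indexed by the positions where $U$ meets the legal order. Your ``peel off $v_k$ and reduce to a smaller instance'' reformulation can be made to work, but you would need to verify that after the contraction the restricted order is still legal for the contracted function and that the set you are bounding is still a valid separator there---this is the content hiding behind your phrase ``delicate bookkeeping,'' and it is not automatic. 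If you find that step awkward, it is cleaner to follow Queyranne directly and prove, for $i=2,\ldots,n$, an inequality of the form
\[
f(W_{i-1}+v_i)-f(v_i)\ \le\ f\bigl(U\cap W_i\bigr)+f\bigl(W_i\setminus U\bigr)-f(U),
\]
established by one application of (crossing) submodularity plus the legal-order inequality at step $i$, with symmetry invoked exactly when $U\cup W_i=V$; specializing to $i=n$ gives $f(\{v_n\})\le f(U)$.
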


Queyranne proves this for a symmetric and submodular function
$f$. However, as observed by Nagamochi and
Ibaraki~\cite{nagamochi_noteminimizing_1998}, this proof only requires
symmetry and crossing submodularity, so Lemma~\ref{lem-pendantpair}
holds.

Observe that if $(t,u)$ is a pendant pair for a symmetric and crossing
submodular function $f$, and $X^*$ is an optimal set, then either
$X^*$ separates $t$ from $u$, in which case $\{u\}$ must also be an
optimal solution, or $X^*$ does not separate $t$ from $u$. In the
latter case we can contract the pair $t$ and $u$ into $t$ (for
simplicity, we reuse $t$), this is, consider the symmetric and
crossing submodular function $f'$ on $V' = V\setminus \{u\}$ defined
by:
\begin{align} \label{eqn-contraction-f}
  f'(X) &= \begin{cases}
    f(X), &\text{if } t \not\in X\subseteq V'\\
    f(X + u), &\text{if } t \in X \subseteq V'.
  \end{cases}
\end{align}
Now (still assuming that not all minimums of $f$ separate $t$ and
$u$), we can obtain an optimal solution $X^*$ for $f$ from an optimal
solution $\hat{X}$  for $f'$.  If $t\notin \hat{X}$, we set $X^*=\hat{X}$, while
if $t\in \hat{X}$, we set $X^*=\hat{X}+u$. Applying this argument $n-1$ times and Lemma \ref{lem-pendantpair}, one can find an optimal solution for the original function by using $O(|V|^3)$ function value oracle calls.

By exploiting the fact that the first element in a legal order can be
chosen arbitrarily, we modify the above argument to also work in the
hereditary version.  In order to do this, it is useful to extend the
notion of contraction as follows.

\begin{defn}Given an hereditary submodular system $(V,f,\I)$, an element $t\in V$ and a set of elements $L \subseteq V$ containing $t$, the system $(V',f',\I')$  obtained by \emph{contracting $L$ into $t$} is defined as follows
\begin{align}\label{eqn-contraction}
V' &= V\setminus L + t;\\
f'(X) &= \begin{cases}
    f(X), &\text{if } t \not\in X\subseteq V'\\
    f(X \cup L), &\text{if } t \in X \subseteq V';
  \end{cases}\\
\I' &= \{X \subseteq V':\ t \in X, X \cup L \in \I\} \cup \{X \subseteq V':\ t \not\in X, X \in \I\}.
\end{align}
\end{defn}

It is easy to check that this construction preserves submodularity
(even crossing submodularity) and symmetry, and that the new family
$\I'$ is also hereditary.  In the next section we use this notion of
contraction iteratively, so it is useful to explore some of its
properties. We associate to each element $w$ of the ground set of a
particular iteration the subset $X_w$ of elements in the original
ground set that have been contracted to it so far. It is easy to check
that a set $A$ of elements in the current ground set belongs to the
contracted hereditary family if and only if the set $X_A = \bigcup_{w
\in A} X_w$ is a member of the original hereditary family, and in
fact, for every set $A$ in the contracted family, $f'(A)=f(X_A)$.  We
also need some extra notions. An element $v \in V$ such that $\{v\}
\not\in \I$ is called a \emph{loop} of $\I$. In particular, if $s$ is
a loop in the original family and we contract some elements into $s$
then, in the resulting contracted family, $s$ is still a loop (by the
hereditary property). Also, for any two (possibly contracted) elements
$t$ and $u$ of $V'$, we say that a set $X \subseteq V$ \emph{separates} $t$
and $u$ if $X_t \subseteq X$ and $X_u \subseteq V\setminus X$ or
vice versa.

\section{Hereditary minimization problem}\label{sec-hereditary}

In what follows, assume that $f$ is a symmetric and crossing
submodular function on $V$, and $\I$ is a non-trivial hereditary
family (i.e. $V \not\in \I$). We show how to compute all minimal
optimal solutions of the hereditary minimization problem given by
$(V,f,\I)$.

Note that if $X$ and $Y$ are two minimal solutions in $\I$ that cross
then $X\setminus Y$ and $Y\setminus X$ are also in $\I$ and, by
minimality, we have $f(X\setminus Y)> f(X)$ and $f(Y\setminus X)
>f(Y)$, implying that $f(X\setminus Y) + f(Y\setminus X) > f(X) +
f(Y)$. Since $X$ and $Y$ cross, the sets $X$ and $V\setminus Y$ are
also crossing and so, using the symmetry and crossing submodularity of
$f$ we also get 
$$f(X\setminus Y) + f(Y\setminus X) \leq  f(X) + f(Y), $$
contradicting our assumptions.  Now suppose that $X$ and
$Y$ are minimal solutions that intersect but not cross (i.e. $X \cup
Y=V$), then by symmetry $V\setminus X \subset Y$ and $V\setminus Y
\subset X$ are also optimal solutions, contradicting the minimality of
the original sets. The previous discussion implies that all minimal
solutions are pairwise disjoint and, in particular, there are only a
linear number of them.

\begin{prop}
Given a symmetric and crossing submodular function $f$ on $V$, and an
hereditary family $\I$ of subsets of $V$, the collection of all
minimal optimal solutions for the associated hereditary minimization
problem are disjoint. 
\end{prop}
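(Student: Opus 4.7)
The plan is to argue by contradiction: suppose there exist two distinct minimal optimal solutions $X, Y \in \I$ with $X \cap Y \neq \emptyset$. I would split the analysis according to whether the pair $(X,Y)$ is crossing or merely intersecting, since the submodularity hypothesis is only available on crossing pairs.

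In the crossing case, heredity of $\I$ places both $X\setminus Y$ and $Y\setminus X$ inside $\I$, and both are nonempty by the very definition of crossing. Minimality of $X$ and $Y$ forces the strict inequalities $f(X\setminus Y) > f(X)$ and $f(Y\setminus X) > f(Y)$, since otherwise one would obtain an optimal nonempty proper subset. The critical observation is that the pair $(X, V\setminus Y)$ is also crossing — its four quadrants are simply a relabeling of those of $(X,Y)$ — so crossing submodularity applies to it. Using symmetry, $f(V\setminus Y)=f(Y)$ and $f(V\setminus(Y\setminus X))=f(Y\setminus X)$, the resulting submodular inequality rearranges to $f(X\setminus Y) + f(Y\setminus X) \leq f(X) + f(Y)$, contradicting the strict inequality derived from minimality.

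In the remaining case, $X$ and $Y$ intersect but do not cross. Neither set can contain the other, otherwise the smaller of the two would directly contradict minimality of the larger; so the only empty quadrant must be $V\setminus(X\cup Y)$, i.e.\ $X\cup Y = V$. But then $V\setminus Y = X\setminus Y$ is a nonempty proper subset of $X$, lies in $\I$ by heredity, and by symmetry has $f$-value equal to $f(Y)$; it is therefore an optimal solution in $\I$ properly contained in $X$, contradicting the minimality of $X$. The only place where care is required is in the first case, where one must verify that $(X, V\setminus Y)$ is genuinely a crossing pair before invoking crossing submodularity; everything else is elementary bookkeeping with symmetry and the hereditary property.
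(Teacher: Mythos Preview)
Your argument is correct and matches the paper's proof essentially step for step: both split into the crossing and non-crossing cases, both apply crossing submodularity to the pair $(X, V\setminus Y)$ combined with symmetry in the first case, and both use the observation $X\cup Y=V$ together with symmetry and heredity to contradict minimality in the second. Your version is slightly more explicit (e.g.\ you justify why neither set contains the other before concluding $X\cup Y=V$), but there is no substantive difference.
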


In what follows we present two algorithms, one to find a particular
minimal optimal solution of the system and another to find all of
them. We remark here that both algorithms are direct extensions of the
algorithms presented by Nagamochi and
Ibaraki~\cite{nagamochi_noteminimizing_1998}, and in fact if we set
$\I$ to be the hereditary family of sets not containing a particular
element $s$, we recover their algorithms.

Since $f$ is crossing submodular we can use Queyranne's lemma to find
a pendant pair $(t,u)$, keep the set associated to $u$ as a candidate
for the optimal solution, contract the pendant pair and
continue. However, this might introduce candidates that are not in the
original hereditary family.  In order to avoid that, we first contract
all the loops, if any, of $\I$ into a single loop $s$, and proceed to
find a pendant pair not containing it, by using $s$ as the \emph{first
element} of the legal order. In this way, we can ensure that every
candidate for optimal solution belongs to $\I$. If the hereditary
family has no loops, then we simply use Queyranne's procedure until a
loop $s$ is created. From that point on we continue as before. The
complete procedure is depicted in Algorithm \ref{algo}.

\begin{algorithm}[h!!!]
\caption{FindOptimal $(V,f,\I)$}
\label{algo}
\begin{algorithmic}[1]
\Require{A submodular system $(V,f,\I)$ where $f$ is symmetric and crossing submodular, and $\I$ is not trivial.}
\Ensure{An optimal set $X^*$ for the hereditary minimization problem.}
\State{Let $(V',f',\I') = (V,f,\I)$, and $\C=\emptyset$. \Comment{$\C$ is the set of candidates.}}
\While{$\I$ has no loops}
        \State{Find any pendant pair $(t,u)$ of $f'$.}
        \State{Add $X_u$ to $\C$. \Comment{$X_u$ is the set of elements of $V$ that have been contracted to $u$. }}
        \State{Update $(V',f',\I')$ by contracting $\{t,u\}$ into $t$.}
\EndWhile \Comment{$\I'$ has at least one loop.}
\State {Let $(V'+s,f',\I')$ be the system obtained by contracting all the loops of $\I$ into $s$ (during the rest of the algorithm, we keep $s$ as an element \emph{outside} $V'$)}
\While{$|V'|\geq 2$}
    \State{Find a pendant pair $(t,u)$ of $f'$ not containing $s$.}
    \State{Add $X_u$ to $\C$.}
    \If{$\{t,u\} \in \I'$}
        \State{Update $(V'+s,f',\I')$ by contracting $\{t,u\}$ into $t$.}
    \Else
        \State{Update $(V'+s,f',\I')$ by contracting $\{s,t,u\}$ into $s$.}
    \EndIf
\EndWhile
\If{$|V'|=1$ (say $V'=\{t\}$)}
    \State{Add $X_t$ to $\C$.} \label{line-last}
\EndIf
\State{Return the set $X^*$ in $\C$ with minimum $f$-value that was added first to $\C$.}
\end{algorithmic}
\end{algorithm}

\begin{thm}\label{thm-sym-alg} Given a symmetric and crossing
submodular function $f$ on $V$, and an hereditary family $\I$ of
subsets of $V$, Algorithm \ref{algo} outputs a minimal optimal
solution for the associated hereditary minimization problem in
$O(|V|^3)$ function value oracle calls.  \end{thm}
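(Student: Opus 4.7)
The running time bound is immediate: the two while loops each perform at most $|V|$ iterations (the working ground set $V'$ loses at least one element per iteration), and each iteration invokes Lemma~\ref{lem-pendantpair} at cost $O(|V|^2)$, giving $O(|V|^3)$ oracle calls overall. For correctness, I would first verify by induction the invariant that every non-$s$ element $w$ of the current ground set satisfies $X_w \in \I$. The only non-trivial step is the second while loop, where we contract $\{t,u\}$ into $t$ precisely when $\{t,u\} \in \I'$, which by definition of $\I'$ means $X_t \cup X_u \in \I$; otherwise the contraction goes into $s$ and leaves every other element's associated set intact. Consequently every candidate $X_u$ placed in $\C$ is a non-empty member of $\I$, so any element of $\C$ is a feasible solution.

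Fix an arbitrary minimal optimal solution $A^* \in \I$. I would prove by induction on the iteration count that, as long as no optimal candidate has yet been added to $\C$, $A^*$ remains \emph{compatible} with the contraction, meaning $A^* = \bigcup_{w \in B} X_w$ for some $B \subseteq V' \setminus \{s\}$. The base case is immediate, and the hereditary property of $\I$ together with $A^* \in \I$ forces $A^* \cap X_s = \emptyset$ throughout the run (any contraction into $s$ involves a set not in $\I$, which by heredity cannot sit inside $A^*$). The key inductive step uses the pendant pair property together with the symmetry of $f'$: for any $U \subseteq V'+s$ separating $t$ and $u$, $f'(U) \geq f'(\{u\})$. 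If the current compatibility set $B$ separates the pendant pair $(t,u)$, then $f(X_u) = f'(\{u\}) \leq f'(B) = f(A^*)$, so $X_u$ is optimal and has just been added to $\C$. Otherwise $B$ contains both or neither of $t,u$; a short case analysis (using heredity in the former case to conclude $\{t,u\} \in \I'$) shows that the chosen contraction preserves compatibility through the step, and the same reasoning handles the transition into the second while loop when the loops of $\I'$ get merged into $s$. Since the algorithm runs until $|V'| \leq 1$, either an iteration triggers the separating case or $A^*$ shrinks to a single surviving element $X_t$ captured by line~\ref{line-last}; either way, $\C$ contains an optimal set.

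The last step, which I expect to be the main obstacle, is showing that the first optimal candidate added to $\C$ is itself a minimal optimal solution. Let $X_{u_i}$ be the first optimal candidate added, at iteration $i$, and suppose for contradiction that some minimal optimal $A^*$ satisfies $\emptyset \neq A^* \subsetneq X_{u_i}$. By minimality of $i$, no iteration $j < i$ added an optimal candidate, so the contrapositive of the pendant pair argument above forces the compatibility set $B_j$ of $A^*$ to never separate the pendant pair for $j < i$; hence $A^*$ remains compatible through iteration $i$. But compatibility at iteration $i$ means $X_{u_i}$ is either entirely contained in $A^*$ or entirely disjoint from $A^*$, both contradicting $\emptyset \neq A^* \subsetneq X_{u_i}$. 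Since the algorithm returns the earliest-added candidate of minimum $f$-value, and every optimal candidate attains that minimum, the returned set coincides with this first optimal, which we have just shown to be minimal.
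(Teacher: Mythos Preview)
Your proposal is correct and follows essentially the same approach as the paper. Your ``compatibility'' invariant is precisely the paper's statement that ``for every element $v$ of the ground set at a particular iteration, the associated set $X_v$ \ldots\ is always either completely inside $Y$ or completely outside $Y$,'' and your optimality and minimality arguments mirror the paper's: if the fixed optimum never separates a pendant pair it survives to the end as the last candidate, and if it does separate one then the corresponding $X_u$ is already optimal and enters $\C$ no later. Your minimality step---using compatibility at iteration $i$ to force $X_{u_i}$ either inside or disjoint from a putative smaller optimum $A^*$---is a slightly crisper packaging of the paper's claim that a strictly smaller optimum would have to separate an earlier pendant pair lying inside $X^*$, but the underlying logic is identical.
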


Let us check the correctness of the algorithm. By induction we can check that at the beginning of each iteration, either $\I'$ is loopless or $s$ is its only loop. From here, we get that the element $u$ of the pendant pair $(t,u)$ found by the algorithm is not a loop of $\I'$, and thus, every candidate set $X_u$ is an element of the original hereditary family $\I$.

To check optimality of $X^*$, we claim that if there is a non-empty
set $Y \in \I$ such that $f(Y)<f(X^*)$, then this set $Y$ must
separate $t$ and $u$ for some pendant pair $(t,u)$ found in the
execution of the algorithm. Indeed, suppose that this was not the
case. Then, by induction, for every element $v$ of the ground set at a
particular iteration, the associated set $X_v$ of elements in the
original ground set that have been contracted into $v$ so far, is
always either completely inside $Y$ or completely outside $Y$. In
particular, in the last iteration $X_s$ must always be outside
$Y$. Therefore, at the end of the algorithm, $Y$ must be equal to the
set $X_t$ defined in line \ref{line-last} and so it is included in the
set of candidates, contradicting the definition of $X^*$. Consider
then the first pendant pair $(t,u)$ separated by $Y$. By the property
of pendant pairs, $f'(\{u\}) \leq f(Y)$ for the function $f'$ at that
iteration. But then, the set $X_u \in V$ of elements that were
contracted to $u$ is a candidate considered by the
algorithm. Therefore $f(X^*)\leq f(X_u)=f'(\{u\})\leq f(Y)$, which
contradicts our assumption.

Furthermore, since we choose $X^*$ as the set that is introduced first
into the family of candidates $\C$ (among the ones of minimum value),
then this set $X^*$ is also be a minimal optimal solution of
$(V,f,\I)$. Indeed, if there is a set $Y\in \I$ such that
$f(Y)=f(X^*)$, with $\emptyset\neq Y\subset X^*$, then this set must
separate two elements of $X^*$. This means that at some moment before
the introduction of $X^*$ as a candidate, the algorithm finds a
pendant pair $(t,u)$ separated by the set $Y$ with both $t,u \in
X^*$. At this iteration, the candidate $X_u$ introduced is such
that $f(X_u)=f(Y)=f(X^*)$, which is a contradiction since $X_u$ is
introduced earlier than $X^*$ to the set of candidates.

In order to achieve $O(|V|^3)$ function value oracle calls we don't
compute the functions $f'$ explicitly, but instead we keep track of
the partition of $V$ induced by the contraction of the elements.  By
using the fact that each iteration decreases the cardinality of $V'$
by one or two units and Lemma \ref{lem-pendantpair} we obtain the
desired bound on the number of function value oracle calls. This
completes the proof of Theorem \ref{thm-sym-alg} (and hence of Theorem
\ref{thm-sym} as well).

We can use the fact that the minimal solutions are disjoint to find
all minimal solutions. We first compute one particular minimal
solution $X^*$ of the system and contract it into a single element $s$
which we will consider a loop for the new family. Then we run the
algorithm again in such a way that, every time a minimal solution $X$
is found we contract $X+s$ into $s$ in order to avoid finding
solutions containing $X$ after that. The procedure is described in
Algorithm~\ref{algo2}.

\begin{algorithm}[ht]
\caption{FindMinimals $(V,f,\I)$}
\label{algo2}
\begin{algorithmic}[1]
\Require{A submodular system $(V,f,\I)$ where $f$ is symmetric and crossing submodular and $\I$ is not trivial.}
\Ensure{The family $\F$ of minimal optimal solutions for the hereditary minimization problem.}
\State{Compute, using \textrm{FindOptimal}, a minimal optimal solution $X^*$ for the system. Let $\lambda^*=f(X^*)$.}
\State{Let $(V'+s,f',\I')$ be the system obtained by contracting $X^*$ and all the loops of $\I$ into a single element, denoted $s$. (During the execution of the algorithm, we keep $s$ as an element \emph{outside} $V'$.)}
\State{$\I' \gets \I' \setminus \{A \in \I': s \in A\}$. \Comment{If $s$ is not a loop, we consider it as one.}}
\State{Let $\F=\{X^*\}$.}
\For {each $v \in V$ with $f'(\{v\})=\lambda^*$}
    \State {Add $\{v\}$ to $\F$}
    \State{Update $(V'+s,f',\I')$ by contracting $\{s,v\}$ into $s$.}
\EndFor
\While{$|V'|\geq 2$}\Comment{$f'(\{v\})>\lambda^*$ for all $v \in V'$, and $s$ is the only loop of $\I'$.}
\State{Find a pendant pair $(t,u)$ of $f'$ not containing $s$.}
\If{$\{t,u\} \in \I'$ and $f'(\{t,u\})=\lambda^*$}
    \State{Add $X_t\cup X_u$ to $\F$.}
    \State{Update $(V'+s,f',\I')$ by contracting $\{s,t,u\}$ into $s$.}
\ElsIf{$\{t,u\} \in \I'$ and $f'(\{t,u\})>\lambda^*$}
    \State{Update $(V'+s,f',\I')$ by contracting $\{t,u\}$ into $t$.}
\Else \Comment{$\{t,u\} \not\in \I'$.}
    \State{Update $(V'+s,f',\I')$ by contracting $\{s,t,u\}$ into $s$.}
\EndIf
\EndWhile
\State{Return the family $\F$.}
\end{algorithmic}
\end{algorithm}

\begin{thm}\label{thm-minimal-alg} Given a symmetric and crossing
submodular function $f$ on $V$, and an hereditary family $\I$ of
subsets of $V$, Algorithm \ref{algo2} outputs the collection of all
minimal optimal solutions for the associated hereditary minimization
problem in $O(|V|^3)$ function value oracle calls.  \end{thm}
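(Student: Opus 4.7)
The plan is to establish the two claims: (i) every set placed into $\F$ is a minimal optimal solution, and (ii) every minimal optimal solution eventually enters $\F$. The initial set $X^*$ is handled by Theorem~\ref{thm-sym-alg}, so the real work concerns the for loop and the main while loop. Since Theorem~\ref{thm-sym-alg} already gives the $O(|V|^3)$ bound for the first stage, and the for loop uses only $O(|V|^2)$ oracle calls and the while loop performs at most $|V|$ legal‐order computations of cost $O(|V|^2)$ (by Lemma~\ref{lem-pendantpair}), the complexity estimate follows routinely once correctness is proved.

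The whole argument is carried by a single invariant maintained at the top of the while loop: at every iteration, for every minimal optimal solution $Z$ of $(V,f,\I)$ that has not yet been added to $\F$, the following holds in the current contracted system $(V'+s,f',\I')$: $Z$ is disjoint from $X_s$, and for every $w \in V'+s$ the block $X_w$ is either entirely contained in $Z$ or entirely disjoint from it; equivalently, $Z$ is represented by some $Z' \subseteq V'$ with $f'(Z')=\lambda^*$. Initially this holds because minimal optimal solutions are pairwise disjoint (by the Proposition preceding the algorithm), so every unfound $Z$ is disjoint from $X^*$ and from all singleton solutions contracted in the for loop, and loops of $\I$ cannot meet any $Z\in\I$. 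Together with the running condition $f'(\{v\})>\lambda^*$ for every $v \in V'$ (ensured by the for loop and preserved in the while loop because the only contraction into a non-$s$ element happens in the branch $f'(\{t,u\})>\lambda^*$), the pendant pair property immediately rules out any unfound $Z$ separating the current pendant pair $(t,u)$: if $Z'$ separated them, one of $Z'$ or its complement would contain $u$ but not $t$, forcing $f'(\{u\})\le f'(Z')=\lambda^*$, a contradiction.

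Propagating the invariant through each of the three contraction branches is the delicate step. When $\{t,u\}\in\I'$ and $f'(\{t,u\})>\lambda^*$ we contract into $t$: no unfound $Z$ separates $\{t,u\}$ by the pendant‐pair argument, so the new block $X_t\cup X_u$ is still wholly in or wholly out of $Z$. When $\{t,u\}\notin\I'$ we contract $\{s,t,u\}$ into $s$: by the hereditary property no $Z\in\I$ can contain both $X_t$ and $X_u$, since that would imply $X_t\cup X_u\in\I$; combined with $Z\cap X_s=\emptyset$, this says $Z$ still does not split the new $s$-block. Finally, when $\{t,u\}\in\I'$ and $f'(\{t,u\})=\lambda^*$ we add $X_t\cup X_u$ to $\F$; any other unfound $Z$ cannot contain $X_t\cup X_u$ (else $Z$ is a minimal optimum strictly containing the optimum $X_t\cup X_u$, impossible) and cannot split the pair by the pendant pair argument, so contracting $\{s,t,u\}$ into $s$ again preserves the invariant. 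Minimality of the set $X_t\cup X_u$ added in this last branch is a consequence of the same invariant: any proper optimal subset of it must be a union of blocks among $\{X_t,X_u\}$, and $f'(\{t\}),f'(\{u\})>\lambda^*$ rule these out.

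Completeness closes the argument. At termination $|V'|\le 1$, so any still‐unfound $Z$ satisfies $Z=\emptyset$ or $Z=X_v$ for the unique $v\in V'$; the first is excluded since $Z$ is non-empty, the second by $f'(\{v\})>\lambda^*$. Hence $\F$ contains every minimal optimal solution. I expect the main obstacle to be the careful verification that the invariant survives all three contraction branches and that the running inequality $f'(\{v\})>\lambda^*$ is never violated, since these two facts together underpin both the pendant‐pair reduction and the minimality of the newly discovered solutions.
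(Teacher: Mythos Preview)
Your argument is correct and mirrors the paper's: both maintain the invariant that each unfound minimal optimal solution is a union of current blocks disjoint from $X_s$, use the pendant-pair property together with the running condition $f'(\{v\})>\lambda^*$ to rule out separation, and finish by exhausting $V'$. The only step worth tightening is the minimality of $X_t\cup X_u$: the invariant applies directly only to \emph{minimal} optimal solutions not yet in $\F$, so rather than asserting that an arbitrary proper optimal subset is a union of blocks, pass to a minimal optimal subset contained in it (which cannot lie in $X_s$, hence satisfies the invariant and must equal $X_t$ or $X_u$, contradicting $f'(\{t\}),f'(\{u\})>\lambda^*$).
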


By the previous discussion, we can see that every set added to $\F$
during the execution of this algorithm is a minimal optimal solution
of $(V,f,\I)$. We only need to show that no other minimal optimal
solution exists. Assume that this is not the case, i.e.~that there is
a nonempty set $Y \in \I$ that is a minimal optimal solution of
$(V,f,\I)$ with $Y \not\in \F$.

We first claim that at every moment and for every $v \in V'+s$, the associated set $X_v$ is always completely inside or completely outside $Y$. We prove this by induction. The claim is true at the beginning of the algorithm, and immediately after all the optimal singletons are added to $\F$ and contracted into~$s$. Suppose that the claim holds at the beginning of an iteration in the while-loop and let $(t,u)$ be the pendant pair found at that moment.  We note that $Y$ can't separate $t$ from $u$, since in that case we would have $f'(\{u\})=f(Y)=\lambda^*$. But, by construction, the algorithm ensures that at every iteration the singletons are not optimal, i.e., $f'(\{v\})>\lambda^*$ for every $v \in V'$. It follows that both $X_t$ and $X_u$ are either completely inside or completely outside $Y$. If all the elements participating in a contraction at this iteration are completely inside or completely outside $Y$ then the claim will still hold at the end of the iteration. The only case left to consider is that $X_t \cup X_u \subseteq Y$, $X_s \subseteq V\setminus Y$ and we contract $\{s,t,u\}$ into $s$. We only do this when $\{t,u\}\in \I'$ and $f'(\{t,u\})=\lambda^*$ or when $\{t,u\} \not\in \I'$. Since $Y \in \I$, we must be in the first case, and so, according to the algorithm, $X_{t,u}=X_t \cup X_u$ gets added to $\F$. By minimality of $Y$ we obtain $Y=X_{t,u}$ which contradicts the fact that $Y \not\in \F$. This proves the claim.

Since $Y$ is never added to $\F$, and $X_s \supseteq X^*$ is completely outside $Y$, the previous claim implies that after the while-loop, the set $Y$ must correspond to the unique element in $V'$, say $Y=X_t$, for $V'=\{t\}$. But by construction, we know that a singleton cannot be optimal, reaching a contradiction. This proves the correctness of the algorithm and, using the fact that both algorithms presented compute pendant pairs $O(|V|)$ times, it also completes the proof of Theorem \ref{thm-minimal-alg}.

\section{Extensions} \label{sec:ext}

We observe here that the proof of correctness of the first
algorithm relies only on the fact that we can find pendant pairs not
containing a particular element $s$ in each iteration. The second
algorithm also needs that minimal optimal solutions are disjoint. We
can use this to generalize the previous results to wider classes of
functions.

Given a set function $f$ on $V$, and a partition $\Pi$ ($=\{V_1, V_2,
\cdots, V_k\}$) of $V$, we define the \emph{fusion of $f$ relative to
$\Pi$} (also called the induced set function on $\Pi$), denoted by
$f_\Pi$, to be the function defined on subsets $X\subseteq \Pi$ by
$$f_\Pi(X)= f \biggl(\bigcup_{S \in X}S\biggr).$$ 
We say that a set function $f$ on $V$ is \emph{admissible} if for
every partition $\Pi$ of $V$ in at least three parts, and for every $S
\in \Pi$, the function $f_\Pi$ admits a pendant pair (defined as
before) avoiding $S$,
that is, a pendant pair $(T,U)$ with $S \not\in\{T,U\}$. We observe
that if $f$ is a symmetric crossing submodular function on $V$, so are
all the functions induced by partitions. Lemma \ref{lem-pendantpair}
says not only that symmetric crossing submodular functions are
admissible (a fact originally proven by Mader~\cite{mader_ber_1972}),
but that for every induced function we can find such a pendant pair
efficiently: for each $f_\Pi$ we use $O(|\Pi|^2)$ oracle calls to
$f$. 

The discussion at the beginning of this section implies the following result.

\begin{thm}\label{thm-extension} Given an hereditary
family $\I$ on $V$ and an admissible function $f$ on $V$ such that for
any partition $\Pi$ of $V$ in at least three parts and for every $S
\in \Pi$ we can find a pendant pair avoiding $S$ using $T(|\Pi|)$
calls to some oracle. Then, there is an algorithm that finds a minimal
optimal solution for the associated hereditary minimization problem
using $O(|V|\cdot T(|V|))$ oracle calls. If we can further ensure that
minimal solutions are disjoint, then we can find all minimal solutions
using $O(|V|\cdot T(|V|))$ oracle calls.  \end{thm}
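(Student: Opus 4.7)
The plan is to run Algorithms \ref{algo} and \ref{algo2} essentially unchanged, replacing each invocation of Lemma \ref{lem-pendantpair} by the hypothesized pendant-pair subroutine for admissible functions. At every stage, the working object is the induced function $f_\Pi$, where $\Pi$ is the partition of $V$ recording which original elements have been merged together so far; the distinguished part $s$ plays the role of the loop of $\I'$. Whenever the algorithm needs a pendant pair, we apply admissibility to $f_\Pi$ with $S=s$ to obtain a pendant pair $(t,u)$ with $s \notin \{t,u\}$, at a cost of $T(|\Pi|) \leq T(|V|)$ oracle calls. The contraction bookkeeping (maintaining the association $v \mapsto X_v$ and updating $\I'$) is purely combinatorial and does not depend on the nature of $f$.

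For correctness of the first part, I would revisit the proof of Theorem \ref{thm-sym-alg} and verify that it uses submodularity and symmetry of $f$ only through Lemma \ref{lem-pendantpair}. The remaining ingredients are (i) the hereditary property of $\I$, preserved under contraction by definition; (ii) the bookkeeping invariant that every $X_v$ lies entirely inside or entirely outside any candidate better solution $Y$ until some pendant pair $(t,u)$ is separated by $Y$; and (iii) the defining inequality of a pendant pair, $f_\Pi(\{u\}) \leq f(Y)$, which yields $f(X^*)\leq f(X_u) = f_\Pi(\{u\}) \leq f(Y)$ and contradicts the assumption $f(Y)<f(X^*)$. All three steps survive the generalization verbatim, so the modified Algorithm \ref{algo} computes a minimal optimal solution whenever $f$ is admissible.

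For the second part, I would similarly reread the correctness argument for Algorithm \ref{algo2}. Beyond the pendant-pair inequality, that argument used only one additional fact: that minimal optimal solutions are pairwise disjoint. In the symmetric crossing submodular setting this was derived from the uncrossing argument preceding Algorithm \ref{algo}, but in the present setting disjointness is the explicit extra hypothesis. Granted it, the induction showing that each $X_v$ is consistent with any putative minimal optimum $Y \notin \F$ carries through unchanged, and the same case analysis forces $Y = X_{t,u}$ at the critical contraction step, contradicting $Y \notin \F$.

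The oracle count is then immediate: each iteration of either algorithm shrinks $|\Pi|$ by at least one, so the loops terminate after $O(|V|)$ iterations; since each iteration invokes the pendant-pair subroutine once at cost $T(|V|)$, the total is $O(|V| \cdot T(|V|))$ oracle calls. The main obstacle is the verification in the two preceding paragraphs: one must audit the correctness proofs of Theorems \ref{thm-sym-alg} and \ref{thm-minimal-alg} line by line to confirm that no use of crossing submodularity or symmetry of $f$ leaks in outside of Lemma \ref{lem-pendantpair} and (for the second algorithm) the disjointness lemma. Once this audit is complete, Theorem \ref{thm-extension} emerges as a clean abstraction of the arguments developed in Section \ref{sec-hereditary}.
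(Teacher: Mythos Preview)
Your proposal is correct and matches the paper's own approach: the paper explicitly states that the correctness of Algorithm~\ref{algo} relies only on the ability to find pendant pairs avoiding a prescribed element, and that Algorithm~\ref{algo2} additionally requires disjointness of minimal optima, so Theorem~\ref{thm-extension} follows directly from this observation. Your audit of the proofs of Theorems~\ref{thm-sym-alg} and~\ref{thm-minimal-alg} and the oracle-call count are exactly what the paper leaves implicit in the phrase ``The discussion at the beginning of this section implies the following result.''
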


Rizzi~\cite{rizzi_note_2000} exhibits a wider class of admissible
functions for which pendant pairs can be found efficiently. Consider a
real valued map $d$ defined on pairs of disjoint subsets of $V$, that
satisfies: \begin{enumerate} \item {\bf Symmetry:} $d(A,B)=d(B,A)$ for every
$A,B$ disjoint.  \item {\bf Monotonicity:} $d(A,B) \leq d(A,B\cup W)$ for
every $A,B,W$ pairwise disjoint.  \item {\bf Consistency:} $d(A,W) \geq
d(B,W)$ implies that $d(A, W \cup B) \geq d(B, W \cup A)$ for every
$A,B,W$ pairwise disjoint.  \end{enumerate}

For example, if $G$ is a graph with vertex set $V$ then the function
$d(A,B)$ defined as the weight of the edges having one endpoint in $A$
and the other in $B$ satisfies the previous properties. More
generally, if $f$ is a symmetric crossing submodular function then the
function $d(A,B) = \frac{1}{2}\left(f(A)+f(B) - f(A \cup B)\right)$ is
symmetric, monotone and consistent. The following more interesting
example shows that if $d$ is symmetric, monotone and consistent then
$f(S)=d(S,V\setminus S)$ is not necessarily a (crossing) submodular
function.  This example is given by Rizzi. Given a weighted graph $G$
on $V$, let $\lambda(u,v)$ be the shortest path distance between $u$
and $v$, and define the function $d(A,B)$ as the maximum value of
$\lambda(u,v)$ for $u \in A$ and $b\in B$. It is easy to check that
this map is symmetric, monotone and consistent. The coresponding
function $f$ given by $f(S)=d(S,V\setminus S)$ is, however, not
crossing submodular. Consider indeed the 4-cycle
$(V,E)=\{\{a,b,c,d\},\{ab,bc,cd,da\}\}$ with unit weights. We have
$3=f(\{a,c\})+f(\{a,d\}) < f(\{a,c,d\})+f(\{a\})=4$, and so it is not
(crossing) submodular.

In our terminology, Rizzi shows that for every such function $d$, the
set function $f(S)=d(S,V\setminus S)$ is admissible and we can find a
pendant pair avoiding any element by using a procedure similar to
Queyranne's. This procedure uses $O(|V|^2)$ oracle calls for $d$. Our
theorem then implies that we can find one minimal minimizer for an
admissible function $f$ constrained to a hereditary family using
$O(|V|^3)$ oracle calls.

Our second algorithm to find all minimal optimal solutions when
restricted to a hereditary family also
applies to set functions $f$ arising from a symmetric, monotone and
consistent map $d$, as we can argue that the minimal optimal solutions
are disjoint.

\begin{lem}
Let $d$ be a symmetric, monotone and consistent map on $V$ as defined
above, and let $f$ be defined by $f(S)=d(S,V\setminus S)$ for all
$S\subset V$. Let $\I$ be an hereditary family of subsets of $V$. Then
the minimal minimizers of $f$ constrained to $\I$ are disjoint. 
\end{lem}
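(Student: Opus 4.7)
The plan is to assume for contradiction that $X$ and $Y$ are two distinct minimal minimizers of $f$ over $\I$ with $X\cap Y\neq\emptyset$, and to derive a contradiction; let $\lambda^*=f(X)=f(Y)$ and write $A=X\setminus Y$, $B=X\cap Y$, $C=Y\setminus X$, $D=V\setminus(X\cup Y)$.

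First I would dispose of the non-crossing subcase. If $A=\emptyset$ (resp.\ $C=\emptyset$), then $X\subsetneq Y$ (resp.\ $Y\subsetneq X$) is a proper optimal subset in $\I$, contradicting minimality. If instead $D=\emptyset$, so $X\cup Y=V$, then symmetry of $f$ (which follows immediately from symmetry of $d$) gives $f(V\setminus X)=\lambda^*$; here $V\setminus X$ is a nonempty proper subset of $Y$ (using $X\cap Y\neq\emptyset$, together with $X\neq V$, since $X=V$ would force $Y\subsetneq X$ to be a proper optimal subset of $X$) and belongs to $\I$ by hereditariness, again contradicting minimality of $Y$.

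The main case is when $X$ and $Y$ cross, so $A,B,C,D$ are all nonempty. By hereditariness, $A,C\in\I$, and by minimality of $X$ and $Y$ one has $f(A)>\lambda^*$ and $f(C)>\lambda^*$. Monotonicity of $d$ (combined with symmetry) yields $d(A,D)\leq d(A\cup B,D)$ and $d(C,D)\leq d(B\cup C,D)$, so at least one of
\[d(A\cup B,D)\geq d(C,D)\qquad\text{or}\qquad d(B\cup C,D)\geq d(A,D)\]
must hold, as the alternative would chain to the contradiction $d(A,D)<d(A,D)$.

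To close the main case I would apply consistency to whichever of these two inequalities holds, with $W=D$ and the pairwise disjoint triple $\{A\cup B,C,D\}$ or $\{B\cup C,A,D\}$ respectively. The first inequality then becomes $d(A\cup B,C\cup D)\geq d(C,A\cup B\cup D)$, i.e., $f(X)\geq f(C)$, contradicting $f(C)>\lambda^*=f(X)$; the second is symmetric and gives $f(Y)\geq f(A)>\lambda^*=f(Y)$. I expect the main subtlety to be spotting the correct instance of the consistency axiom, namely the choice $W=D$ that lets the left side collapse to $f(X)$ (respectively $f(Y)$) after adjoining $C$ (respectively $A$); once this is identified, the argument uses each of symmetry, monotonicity, and consistency essentially once.
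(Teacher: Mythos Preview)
Your argument is correct, but it differs from the paper's in the choice of the consistency instance. The paper takes the contrapositive of consistency with $W=S\cap T$ (your $B$), $A'=V\setminus T$, $B'=T\setminus S$: from $f(T\setminus S)>f(T)$ it deduces $d(T\setminus S,\,S\cap T)>d(V\setminus T,\,S\cap T)$, then applies monotonicity to obtain $d(T\setminus S,\,S\cap T)>d(S\setminus T,\,S\cap T)$, and finally swaps the roles of $S$ and $T$ to reach a contradiction. You instead take $W=D$ and use consistency in the forward direction, after a monotonicity-based dichotomy, to get $f(X)\geq f(C)$ (or the symmetric inequality) directly.

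What each buys: the paper's route handles all intersecting pairs uniformly, with no separate treatment of the case $D=V\setminus(X\cup Y)=\emptyset$, since $V\setminus T$ remains nonempty and the monotonicity step degenerates to an equality; your route requires the preliminary case analysis (which you carry out correctly), but in the crossing case lands immediately on the transparent inequality $f(X)\geq f(Y\setminus X)$, making it very clear where minimality is used. Both are short and use each axiom essentially once.
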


\begin{proof}
Let $S$ and $T$ be two intersecting minimal minimizers of $f$ over
$\I$. Since $\I$ is hereditary, $S\setminus T$ and $T\setminus S$ are
also in $\I$, and by minimality, we have that $f(S\setminus T)>f(S)$
and $f(T\setminus S)>f(T)$. 

By the consistency assumption applied to $A=V\setminus T$,
$B=T\setminus S$ and $W=S\cap T$, we derive that $d(B,W)>d(A,W)$,
i.e. $d(T\setminus S, S\cap T)>d(V\setminus T, S\cap T)$. Furthermore,
by monotonicity, we get $d(V \setminus T, S\cap T)\geq d(S\setminus T,
S\cap T)$ implying that 
$$d(T\setminus S, S\cap T)> d(S\setminus T, S\cap T).$$
But, by exchanging the roles of $S$ and $T$, we get the reverse
inequality, contradicting the fact that two minimal minimizers can
intersect. 
\end{proof}

Another line of generalization is the one proposed by Nagamochi and
Ibaraki~\cite{nagamochi_noteminimizing_1998} and
Narayanan~\cite{narayanan_noteminimization_2003}. They consider
\emph{restrictions} of symmetric and crossing submodular
functions. Note that if $h$ is a symmetric and crossing submodular
function on $V$, and $T$ is a nonempty subset $T$ of $V$, then the
restriction $f$ of $h$ to the set $T$ defined as $f(X) = h(X)$ for all
$X \subseteq T$ is intersecting submodular (i.e.~the submodular
inequality holds for any pair of intersecting sets $A, B\subseteq V$,
that is sets  with $A\setminus B\neq
\emptyset$, $B\setminus A\neq \emptyset$, and $A\cap B\neq
\emptyset$), but it is not necessarily symmetric. However, it still
satisfies a weaker property known as intersecting posimodularity, that
is: \begin{equation}\label{eqn-posimodularity} f(A \setminus B) + f(B
\setminus A) \leq f(A) + f(B), \end{equation} for every pair of
intersecting subsets $A$ and $B$ of $V$. In fact, it is easy to see
that any intersecting submodular and intersecting posimodular function
can be obtained as a restriction of a symmetric crossing submodular
function (see \cite{narayanan_noteminimization_2003}). To be precise,
if $f$ is intersecting submodular and intersecting posimodular on $V$,
and $s$ is an element outside $V$, then the \emph{antirestriction}
function $g$ on $V+s$ defined as:
\begin{align}\label{antirestriction1}
  g(X) &= \begin{cases}
    f(X), &\text{if } s \not\in X\\
    f(V \setminus X), &\text{if } s \in X.
  \end{cases}
\end{align}
is a symmetric and crossing submodular function on $V + s$. Note also that for any hereditary family $\I$ on $V$, the set of optimal solutions of the system $(V,f,\I)$ is the same as the set of optimal solutions of $(V+s,g,\I)$, and so, we can find all the minimal minimizers of the original system by applying our algorithms to the second one.  This type of functions appears very often, for example, the sum of a symmetric submodular function with a modular function is clearly posimodular but it is not necessarily symmetric.

It is worth noting at this point that we can also use our methods to find all the inclusionwise maximum minimizers of \emph{contractions} (in the submodular sense) of symmetric and crossing submodular functions constrained to \emph{co-hereditary families} (closed under union). Given a symmetric crossing submodular function $h$ on $V$, and a nonempty set $T \subseteq V$, the contraction $f$ of $h$ to the nonempty subset $T$ of $V$ is defined as $f(X) = h(X \cup (V\setminus T)) - h(V\setminus T)$. Then, it is easy to see that the function $\bar{f}: T \to \R$ defined as $\bar{f}(X)=f(T\setminus X)$ is intersecting submodular and intersecting posimodular since $\bar{f}(X)=f(T\setminus X)=h(T\setminus X \cup (V\setminus T))-h(V\setminus T)=h(V\setminus X) -h(V\setminus T)= h(X) - h(V\setminus T)$. And so, in order to find all the maximum minimizers of $f$ under a co-hereditary family $\I$ of $T$ we can simply find the minimum minimizers of $\bar{f}$ under the hereditary family formed by the complements of the sets in $\I$.

\paragraph{Acknowledgment:} We would like to thank Shaddin Dughmi and Jan Vondrak for introducing us to the problem of minimizing submodular functions under cardinality constraints and for very useful discussions.

\bibliographystyle{abbrv}

\end{document}